\theoremstyle{plain}
\newtheorem{theorem}{Theorem}
\newtheorem{lemma}[theorem]{Lemma}
\theoremstyle{definition}
\newtheorem{example}[theorem]{Example}
\newcommand{\A}{{\mathcal A}}
\newcommand{\I}{{\mathcal I}}
\newcommand{\M}{{\mathcal M}}
\newcommand{\bM}{{\boldsymbol{M}}}
\newcommand{\bA}{{\boldsymbol A}}
\newcommand{\bV}{\boldsymbol{V}}
\newcommand{\bu}{{\boldsymbol u}}
\newcommand{\bG}{{\boldsymbol{G}}}
\newcommand{\fq}{\mathbb{F}_q}
\newcommand{\al}{\alpha}
\newcommand{\supp}{{\sf supp}}
\newcommand{\sgn}{{\sf sgn}}
\newcommand{\define}{\stackrel{\mbox{\tiny $\triangle$}}{=}}
\newcommand{\nin}{\noindent}
\newcommand{\et}{{\emph{et al.}}}
\newcommand{\vM}{{{\sf var}}(\boldsymbol{M})}
\begin{document}
\pagestyle{empty}

\title{On the Existence of MDS Codes Over Small Fields With Constrained Generator Matrices}
\author{
   \IEEEauthorblockN{
     Son Hoang Dau\IEEEauthorrefmark{1},
     Wentu Song\IEEEauthorrefmark{2}, 
     Chau Yuen\IEEEauthorrefmark{3}
		} 
   \IEEEauthorblockA{
   Singapore University of Technology and Design, Singapore\\     
		Emails: $\{${\it\IEEEauthorrefmark{1}sonhoang\_dau, 
		\IEEEauthorrefmark{2}wentu\_song,
		\IEEEauthorrefmark{3}yuenchau}$\}$@sutd.edu.sg		
		}
 }
\maketitle

\begin{abstract}
We study the existence over small fields of Maximum Distance Separable (MDS) codes with generator matrices
having specified supports (i.e. having specified locations of zero entries). This problem unifies and simplifies the problems posed in recent works 
of Yan and Sprintson (NetCod'13) on weakly secure cooperative data exchange, of Halbawi {\et} (arxiv'13) on distributed
Reed-Solomon codes for simple multiple access networks, and of Dau {\et} (ISIT'13) on MDS codes with 
balanced and sparse generator matrices. 
We conjecture that there exist such $[n,k]_q$ MDS codes as long as $q \geq n + k - 1$, 
if the specified supports of the generator matrices
satisfy the so-called MDS condition, which can be verified in polynomial time.     
We propose a combinatorial approach to tackle the conjecture, and prove that the conjecture
holds for a special case when the sets of zero coordinates of rows of the generator matrix
share with each other (pairwise) at most one common element. 
Based on our numerical result, the conjecture is also verified for all $k \leq 7$. 
Our approach is based on a novel generalization of the well-known Hall's marriage theorem,
which allows (overlapping) multiple representatives instead of a single representative for each subset.     
\end{abstract}

\section{Introduction}
\label{sec:intro}

\subsection{A Conjecture on MDS Codes}

Maximum Distance Separable (MDS) codes~\cite{MW_S}, in particular Reed-Solomon (RS) codes, arguably form 
the most structurally elegant family of error-correcting codes in the literature of coding theory.
These well-known codes are ubiquitous, with applications found across a vast 
area of modern information technology, ranging from data storage media
such as CDs and DVDs and data storage systems such as RAID 6, to deep space communications.   

Despite a huge body of research on MDS codes, there are still 
challenging open problems, such as the one stated in the famous \emph{MDS Conjecture}: 
there exists an $[n,k]_q$ MDS code if and only if $n \leq q+1$ for all $q$ and $2 \leq k \leq q-1$, 
except when $q$ is even and $k \in \{3,q-1\}$, in which case $n \leq q + 2$. 
The existence of such an MDS code if the above conditions are satisfied is well known, 
via the use of (extended) Generalized Reed-Solomon (GRS) codes~\cite{MW_S}.
However, when we impose some further condition on the structure of the generator matrix, 
then the existence of such an MDS code over small fields, when $n$ and $k$ are fixed, is not known. 
In this paper we pose another conjecture (Fig.~\ref{fig:MDS_conjecture}) on the existence of MDS 
codes over small fields with some constraint on the support of the generator matrix. 
In this conjecture, if we allow the field size $q$ to be sufficiently large, then 
it is known that there exists an $[n,k]_q$ MDS code satisfying the stated requirement. 
A proof for
this claim can be found, for instance, in \cite{DauSongDongYuenISIT2013}, Lemma 1--4, 
with the condition that rows of $\bM$ have weight $n-k+1$ being removed in Lemma 1. 
However, requiring the field size to be as low as $n + k - 1$ makes the problem much more challenging. 
\begin{figure}[htb]
\centering{
\fbox{
\parbox{3.3in}{
\centerline{\bf GM-MDS Conjecture}
Let $\bM = (m_{i,j})$ be a $k \times n$ binary matrix satisfying the so-called \emph{MDS Condition}:
\[
|\cup_{i \in I} \supp(\bM_i)| \geq n - k + |I|,
\]
for all nonempty subsets $I \subseteq \{1,2,\ldots, k\}$, where $\supp(\bM_i) = \{j \mid 1 \leq j \leq n,\ m_{i,j} \neq 0\}$
is the support of the $i$th row of $\bM$. Then for every prime power $q \geq n + k - 1$, 
there exists an $[n,k]_q$ MDS code that has a generator matrix $\bG = (g_{i,j})$ satisfying 
$g_{i,j} = 0$ whenever $m_{i,j} = 0$ (we say that $\bG$ \emph{fits} $\bM$ in this case). 
}
}
}
\caption{The GM-MDS Conjecture}
\label{fig:MDS_conjecture}
\vspace{-10pt}
\end{figure}

\subsection{Related Problems}

\textbf{Simultaneous Matrix Completion.}
A mixed matrix is a matrix where each entry is either an element in $\fq$ or an indeterminate. 
Suppose there are $t$ mixed matrices, where each particular indeterminate can only appear once per matrix
but may appear in several matrices. The objective is to assign values for these indeterminates such 
that all resulting matrices simultaneously have maximum ranks. 
Such an assignment of values for the indeterminates is called a \emph{simultaneous completion}.
We can turn our problem into an instance of the Simultaneous Matrix Completion problem as follows. 
For a given $k\times n$ binary matrix $\bM$, let $\vM$ be the matrix obtained from $\bM$
by replacing $m_{i,j}$ with the indeterminate $\xi_{i,j}$ if $m_{i,j} = 1$. 
Consider the set of $t = \binom{n}{k}$ $k\times k$ submatrices of $\bM$. 
If we can find a simultaneous completion that make all $t$ resulting matrices full rank, 
then the $k\times n$ matrix $\bG$ obtained by this completion will be the generator matrix
of an MDS code and $\bG$ fits $\bM$.  
However, it was proved by Harvey {\et}~\cite{HarveyKargerYekhanin2006} that when $q \leq t = \binom{n}{k}$, 
a simultaneous completion may not exist. 
Furthermore, deciding whether a simultaneous completion exists is NP-complete.
Recall that in our conjecture, we require the field size to be as small as $n + k - 1$, which is much smaller than $\binom{n}{k}$. 
Therefore, reducing our problem to the Simultaneous Matrix Completion problem does not give us
any useful answer.  

\textbf{Weakly Secure Cooperative Data Exchange.}
In the Cooperative Data Exchange (CDE) problem, a group of wireless clients wants to exchange
a set of $n$ packets over a shared lossless broadcast channel. Each client has access to 
a subset of packets and requests all other packets. The objective is to find an optimal coding scheme,
which satisfies all requests in a minimum number of packet transmissions.
In their recent work~\cite{YanSprintson2013}, Yan and Sprintson studied the optimal coding scheme
that also achieves the maximum degree of secrecy in the following sense.  
The coding scheme is called $g$-secure if an eavesdropper who eavesdrops all of the
transmissions gains no information (in Shannon's sense) about any group of
$g+1$ or less original packets. 
The objective now is to design a $k \times n$ coding matrix that generates an $[n,k]_q$ MDS 
code, where $k$ is the optimal number of transmissions in the original CDE problem. 
As a result, they came up with a similar matrix completion problem
to ours and proposed a solution with large field size (exponential in the dimension of the matrix).
The existence of MDS code over small fields was also left as an open question. 
In our language, their problem description requires that the rows of the binary matrix $\bM$
(Fig.~\ref{fig:MDS_conjecture}) are partitioned into a certain number of groups or rows, 
and within each groups the rows have the same support. 
We show later, in the Appendix, that a coding matrix for CDE, in fact, must satisfy 
the MDS condition stated in our conjecture. Thus, their code design problem is equivalent to ours.  

\textbf{Distributed Reed-Solomon Codes.}
In a recent work of Halbawi {\et}~\cite{HalbawiHoYaoDuursma2013}, a simple multiple
access network (SMAN) was considered. A number of independent sources with arbitrary rates
try to convey all information to a single destination node, via $n$ relay nodes. 
The objective is to design an efficient coding scheme that can correct arbitrary/adversarial errors
on up to $z$ links/nodes. 
The authors~\cite{HalbawiHoYaoDuursma2013} proposed to use the so-called Distributed Reed-Solomon
codes, with the field size as small as $n + 1$, and also faced a similar matrix completion problem.
We prove in the Appendix that their code design problem is actually equivalent to ours. 
In our language, the authors~\cite{HalbawiHoYaoDuursma2013} proved that our GM-MDS
Conjecture holds for the case when the rows of $\bM$ are divided into \emph{three} groups
such that within each group, the rows share the same supports.  

\textbf{MDS Codes with Balanced Sparsest Generator Matrices.}
In our previous work~\cite{DauSongDongYuenISIT2013}, we prove the existence
of $[n,k]_q$ MDS codes that have balanced sparsest generator matrices for 
$q \geq \binom{n-1}{k-1}$. Such generator matrices have minimum row weights $n - k +1$ and moreover, all columns have approximately the same weights. 
The correctness of GM-MDS Conjecture would imply the existence of such 
MDS codes with balanced sparsest generator matrices over much smaller fields (as long as $q \geq n + k - 1$).  

\subsection{Our Contribution}
In our GM-MDS Conjecture, we unify and simplify the recently studied problems on MDS codes
with generator matrices having specified supports~\cite{YanSprintson2013,DauSongDongYuenISIT2013,
HalbawiHoYaoDuursma2013}. In contrast to previous works in~\cite{YanSprintson2013,
HalbawiHoYaoDuursma2013}, we explicitly and neatly describe the condition imposed on the
support of the generator matrix of such MDS codes, which we refer to as the \emph{MDS Condition}
(Fig.~\ref{fig:MDS_conjecture}). 
Furthermore, we no longer include the requirement that the rows of the matrix are divided into 
groups of the same supports, which may significantly simplify the study of the problem. 

Based on a novel generalization of the well-known Hall's marriage theorem, we propose a combinatorial approach
to attack the problem at hand and prove that our conjecture holds for a special case, 
where the sets of $0$-entries of the rows of $\bM$ share with each other (pairwise) at most one common elements. 
Numerical result shows that the conjecture holds for all $k \leq 7$ and for all $n \geq k$.
With this approach, we completely reduce the original problem to a combinatorial set problem.   

\subsection{Definitions and Notation}
We denote by $\fq$ the finite field with $q$ elements. 
Let $[n]$ denote the set $\{1,2,\ldots,n\}$. 
The \emph{support} of a vector $\bu = (u_1, \ldots, u_n) \in \fq^n$ is defined by
$\supp(\bu) = \{i \in [n]:\ u_i \neq 0\}$.
The (Hamming) \emph{weight} of $\bu$ is $|\supp(\bu)|$. 
We can also define weight and support of a row of a matrix over some finite
field, by regarding the matrix row as a vector over that field. 
Apart from Hamming weight, we also use other standard notions from coding theory such as minimum distance, 
generator matrices, linear $[n,k]_q$ codes, MDS codes, and GRS (for instance, see \cite{MW_S}). 
For a matrix $\bG = (g_{i,j}) \in \fq^{k \times n}$, the \emph{support matrix} of $\bG$ 
is a $k \times n$ binary matrix $\bM = (m_{i,j})$ where $m_{i,j} = 0$ if and only if $g_{i,j} = 0$.  

Our generalization of Hall's marriage theorem is presented in Section~\ref{sec:Hall}. 
We then describe our approach and findings in Section~\ref{sec:main}.
The paper is concluded in Section~\ref{sec:conclusion}.  

\section{A Generalization of Hall's Marriage Theorem}
\label{sec:Hall}

We first recall the famous Hall's marriage theorem. 

\begin{theorem}[Hall's theorem] 
For each $i \in [k]$ let $R_i$ be an arbitrary subset of $[k]$. Suppose that 
\[
|\cup_{i \in I} R_i | \geq |I|,
\]
for all nonempty subsets $I \subseteq [k]$. Then for every $i \in [k]$ there exists
a single representative $r_i \in R_i$ such that $r_i \neq r_{i'}$ whenever $i \neq i'$. 
\end{theorem}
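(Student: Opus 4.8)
The plan is to prove the theorem by induction on $k$, the number of subsets, following the classical case-analysis argument. The base case $k = 1$ is immediate: the hypothesis gives $|R_1| \geq 1$, so any element of $R_1$ serves as $r_1$. For the inductive step I would assume the statement holds for every family of fewer than $k$ subsets (each satisfying the corresponding Hall condition) and split into two cases according to whether the inequality in the hypothesis is ever tight on a proper subfamily.

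Case 1 (slack everywhere): suppose $|\cup_{i \in I} R_i| \geq |I| + 1$ for every nonempty proper subset $I \subsetneq [k]$. Then I would pick an arbitrary element $r_k \in R_k$ (nonempty since $|R_k| \geq 1$) and delete it from the remaining sets, setting $R_i' = R_i \setminus \{r_k\}$ for $i \in [k-1]$. For any nonempty $I \subseteq [k-1]$ the deletion removes at most one element, so
\[
|\cup_{i \in I} R_i'| \geq |\cup_{i \in I} R_i| - 1 \geq (|I| + 1) - 1 = |I|,
\]
and the reduced family of $k-1$ sets satisfies Hall's condition. By induction it admits distinct representatives $r_1, \ldots, r_{k-1}$, each avoiding $r_k$ by construction, and appending $r_k$ yields the desired system.

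Case 2 (a tight subfamily): suppose there is a nonempty proper subset $I \subsetneq [k]$ with $|\cup_{i \in I} R_i| = |I|$; write $U = \cup_{i \in I} R_i$. The subfamily $\{R_i\}_{i \in I}$ inherits Hall's condition and has fewer than $k$ members, so by induction it has distinct representatives, which, having cardinality $|I| = |U|$ and lying in $U$, exhaust $U$. I would then pass to the residual family $R_i' = R_i \setminus U$ for $i \in [k] \setminus I$ and verify its Hall condition via the key counting identity: for nonempty $J \subseteq [k] \setminus I$,
\[
|I| + |\cup_{i \in J} R_i'| = |U| + |(\cup_{i \in J} R_i) \setminus U| = |\cup_{i \in J \cup I} R_i| \geq |J \cup I| = |J| + |I|,
\]
where the final inequality is the hypothesis applied to $J \cup I$ and the last equality uses disjointness of $J$ and $I$. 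Cancelling $|I|$ gives $|\cup_{i \in J} R_i'| \geq |J|$, so the induction applies to this family of $k - |I| < k$ sets and produces representatives drawn from outside $U$, hence disjoint from those chosen for $I$. Merging the two systems completes the step.

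I expect the main obstacle to be the bookkeeping in Case 2, specifically the verification that Hall's condition descends to the residual family $\{R_i'\}_{i \in [k] \setminus I}$: the equality $|U| = |I|$ must be used exactly once to pay for the elements consumed by the tight block, and the displayed identity is where a submodularity-type cancellation does all the work. Everything else, namely the base case and Case 1, is routine, and the two cases are exhaustive because either every proper subfamily has strict slack or some proper subfamily is tight. An alternative route would recast the problem as a maximum bipartite matching and invoke the max-flow/min-cut theorem, but the self-contained induction above is cleaner for our purposes since the generalization in the sequel will build directly on this peeling argument.
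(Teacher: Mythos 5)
Your proof is correct and complete: it is the classical Halmos--Vaughan induction, the two cases are exhaustive, the deletion in Case~1 and the cancellation identity in Case~2 are both verified properly, and the merged system of representatives is distinct because the residual representatives avoid $U$ while the tight-block representatives exhaust it. The paper itself states Hall's theorem without proof, treating it as a known classical result, so there is no in-paper argument to compare against; for what it is worth, your Case~2 bookkeeping (exploiting a tight set via the identity $|X| + |Y| = |X \cup Y| + |X \cap Y|$) is the same submodularity-style device the paper later uses in its proof of the Generalized Hall's theorem (Theorem~\ref{thm:Hall_g}), so your peeling argument dovetails naturally with the sequel.
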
 

Theorem~\ref{thm:Hall_g} generalizes Hall's theorem to the case of multiple representatives. In this generalization, the sets of representatives are allowed to
overlap, but not too much. 
Note that when $n = k$, this theorem reduces to Hall's theorem.  

\begin{theorem}[Generalized Hall's theorem]
\label{thm:Hall_g} 
For each $i \in [k]$ let $R_i$ be an arbitrary subset of $[n]$ $(n \geq k)$. Suppose that 
\begin{equation} 
\label{eq:1}
|\cup_{i \in I} R_i | \geq n - k + |I|,
\end{equation} 
for all nonempty subsets $I \subseteq [k]$. Then for every $i \in [k]$ there exists
a subset $R'_i \subseteq R_i$ such that
\begin{itemize}
	\item $|\cup_{i \in I} R'_i | \geq n - k + |I|$, for all nonempty subsets $I \subseteq [k]$.
	\item $|R'_i| = n - k + 1$, for all $i \in [k]$. 
\end{itemize}
Moreover, such subsets $R'_i$ can be found in polynomial time. 
\end{theorem}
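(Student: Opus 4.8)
The plan is to prove the theorem by a \emph{greedy shrinking} procedure: starting from the sets $R_i$, I repeatedly discard a single well-chosen element from whichever set is still too large, maintaining condition~\eqref{eq:1} as an invariant, until every set has size exactly $n-k+1$. Each $R_i$ already satisfies $|R_i|\geq n-k+1$ (this is \eqref{eq:1} for the singleton $I=\{i\}$), and each deletion strictly decreases the total excess $\sum_{i\in[k]}\bigl(|R_i|-(n-k+1)\bigr)$, so the procedure terminates after at most $\sum_i|R_i|\leq kn$ steps and outputs the desired $R'_i$. The whole content is therefore a \emph{single deletion lemma}: if \eqref{eq:1} holds and $|R_j|\geq n-k+2$ for some $j$, then some $x\in R_j$ may be removed from $R_j$ (leaving every other set intact) so that \eqref{eq:1} still holds.

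The key structural tool I would set up first is submodularity. The function $f(I)=|\cup_{i\in I}R_i|$ is submodular, hence so is $g(I)=f(I)-|I|$, and \eqref{eq:1} says exactly that $g(I)\geq n-k$ for every nonempty $I$, with $n-k$ being the smallest attainable value. Call a nonempty $I$ \emph{tight} if $g(I)=n-k$. The standard submodular argument then shows that if $I$ and $J$ are tight with $I\cap J\neq\emptyset$, then both $I\cup J$ and $I\cap J$ are tight. Consequently, for any $j$ lying in at least one tight set, the tight sets containing $j$ are closed under intersection, so there is a unique minimal tight set $I_{\min}$ containing $j$, and $I_{\min}\subseteq I$ for every tight set $I\ni j$.

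With this in hand I would prove the deletion lemma as follows. If no tight set contains $j$, every constraint involving $j$ has slack at least one, so deleting any element of $R_j$ preserves \eqref{eq:1}; this is the easy case. Otherwise, since $|R_j|\geq n-k+2$ the singleton $\{j\}$ is not tight, so $|I_{\min}|\geq 2$ and $I_{\min}\setminus\{j\}$ is nonempty. Applying \eqref{eq:1} to $I_{\min}\setminus\{j\}$ and using tightness of $I_{\min}$, a one-line inclusion--exclusion computation gives, using $|R_j|\geq n-k+2$,
\[
\bigl|R_j \cap \textstyle\bigcup_{i\in I_{\min}\setminus\{j\}} R_i\bigr|
\;\geq\; |R_j| + (n-k+|I_{\min}|-1) - (n-k+|I_{\min}|)
\;\geq\; n-k+1 \;\geq\; 1 .
\]
Hence I can pick $x$ in this intersection. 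Removing such an $x$ leaves every union $\cup_{i\in I}R_i$ with $I\ni j$ tight unchanged, because $I_{\min}\subseteq I$ forces $x\in\cup_{i\in I\setminus\{j\}}R_i$, so $x$ is never a ``private'' element of $R_j$ inside a tight set; non-tight constraints containing $j$ absorb the deletion through their slack, and constraints not containing $j$ are untouched. This establishes the invariant, and with it the lemma and the theorem.

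Finally, for the polynomial-time claim I would note that each of the $O(kn)$ deletions only requires locating a set with $|R_j|>n-k+1$ and producing a safe element, both of which reduce to checking \eqref{eq:1} and/or exhibiting a minimal tight set. Verifying \eqref{eq:1} amounts to minimizing the submodular function $g$ over nonempty subsets, which I would implement concretely as a single max-flow computation in the bipartite incidence network of the $R_i$, with $I_{\min}$ read off from the minimum cut. The main obstacle is the deletion lemma, and within it the nonemptiness of $R_j\cap\bigcup_{i\in I_{\min}\setminus\{j\}}R_i$: everything hinges on first extracting the lattice structure of tight sets from submodularity, so that a \emph{single} minimal tight set controls all the constraints a deletion could possibly violate.
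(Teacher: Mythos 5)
Your proof is correct, and it reaches the theorem by a genuinely different route from the paper's. Both arguments reduce everything to the same deletion lemma---if \eqref{eq:1} holds and $|R_j|\geq n-k+2$, then some element of $R_j$ can be discarded without breaking \eqref{eq:1}---but the paper proves this by contradiction: it picks two elements $a,b\in R_r$, assumes both deletions fail, extracts two violated constraints $A\cup\{r\}$ and $B\cup\{r\}$ that are therefore tight, and uncrosses them by hand via $|X|+|Y|=|X\cup Y|+|X\cap Y|$ (with a separate case analysis for $A\cap B=\varnothing$, handled through $R_r\setminus\{a\}\subseteq R_A$ and $R_r\setminus\{b\}\subseteq R_B$) to reach the absurdity $2(n-k)+|A|+|B|\geq 2(n-k)+|A|+|B|+1$. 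You instead make the uncrossing structural: submodularity of $g(I)=|\cup_{i\in I}R_i|-|I|$ yields the lattice of tight sets, the minimal tight set $I_{\min}\ni j$ dominates every constraint a deletion could violate, and the count $\bigl|R_j\cap\bigcup_{i\in I_{\min}\setminus\{j\}}R_i\bigr|\geq |R_j|-1\geq n-k+1>0$ exhibits an explicit safe element $x$. I checked your key steps and they hold: the singleton $\{j\}$ is not tight precisely because $|R_j|\geq n-k+2$, the uncrossing of intersecting tight sets is valid since $g$ restricted to nonempty sets is bounded below by $n-k$, and removing $x$ leaves every tight union containing $j$ literally unchanged while slack absorbs the loss elsewhere. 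Your route buys a constructive choice of which element to delete (the paper's algorithm must try $a$, re-verify the MDS condition, and fall back to $b$), whereas the paper's version buys elementarity, invoking no submodular-minimization or min-cut machinery. The only points where you are terser than ideal are routine: that the intersection of all tight sets containing $j$ is itself tight (finite induction on pairwise uncrossing), and that $I_{\min}$ can be read off the residual graph of a max-flow computation; neither affects correctness.
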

\begin{proof} 
To simplify the notation, for a set $I \subseteq [k]$ we use $R_I$ to denote the union $\cup_{i \in I}R_i$. 

Suppose that the sets $R_i$ satisfy (\ref{eq:1}). We keep removing the elements of these
sets while maintaining the MDS Condition (\ref{eq:1}). Assume that at some point, the removal
of any element in any set $R_i$ would make them violate (\ref{eq:1}). 
We prove that now the sets $R_i$ have cardinality precisely $n-k+1$, which concludes the
first part of the theorem. 

Suppose, for contradiction, that there exists $r \in [k]$ such that $|R_r| \geq n - k + 2$. 
Take $a$ and $b$ in $R_r$, $a \neq b$. For all $i \in [k]$, let 
\begin{equation} 
\label{eq:2}
R^a_i = 
\begin{cases}
R_i \setminus \{a\},& \text{ if } i = r,\\
R_i,& \text{ otherwise,} 
\end{cases}
\end{equation} 
and
\begin{equation} 
\label{eq:3}
R^b_i = 
\begin{cases}
R_i \setminus \{b\},& \text{ if } i = r,\\
R_i,& \text{ otherwise.} 
\end{cases}
\end{equation} 
According to our assumption, both of the two collections of sets $\{R^a_i\}_{i \in [k]}$
and $\{R^b_i\}_{i \in [k]}$ violate (\ref{eq:1}). 
Therefore, there exist two nonempty subsets $A \subseteq [k]$ and $B \subseteq [k]$, 
$r \notin A \cup B$, such that 
\begin{equation} 
\label{eq:4}
|R^a_{A \cup \{r\}}| < n - k + |A| + 1,
\end{equation} 
and
\begin{equation} 
\label{eq:5}
|R^b_{B \cup \{r\}}| < n - k + |B| + 1. 
\end{equation} 
Since $r \notin A$, by (\ref{eq:2}) we have
\begin{equation}
\label{eq:6}
|R^a_{A \cup \{r\}}| \geq |R^a_A| = |R_A| \geq n - k + |A|.  
\end{equation} 
Similarly, since $r \notin B$, by (\ref{eq:3}) we have
\begin{equation}
\label{eq:7}
|R^b_{B \cup \{r\}}| \geq |R^b_B| = |R_B| \geq n - k + |B|.  
\end{equation} 
From (\ref{eq:4}) and (\ref{eq:6}) we deduce that
\begin{equation}
\label{eq:8}
|R^a_{A \cup \{r\}}| = |R^a_A| = |R_A| = n - k + |A|.
\end{equation} 
Similarly, from (\ref{eq:5}) and (\ref{eq:7}) we have
\begin{equation}
\label{eq:9}
|R^b_{B \cup \{r\}}| = |R^b_B| = |R_B| = n - k + |B|.  
\end{equation} 
Therefore, 
\begin{equation} 
\label{eq:10}
R^a_{A \cup \{r\}} \cap R^b_{B \cup \{r\}} = R_A \cap R_B.
\end{equation} 
Moreover, as $a \in R^b_{B \cup \{r\}}$ and $b \in R^a_{A \cup \{r\}}$, we deduce that 
\begin{equation} 
\label{eq:11}
R^a_{A \cup \{r\}} \cup R^b_{B \cup \{r\}} = R_{A \cup B \cup \{r\}}. 
\end{equation} 
From (\ref{eq:8}) and (\ref{eq:9}) we have
\begin{equation} 
\label{eq:13}
\begin{split}
&\quad\ 2(n-k) + |A| + |B|\\
&= |R^a_{A \cup \{r\}}| + |R^b_{B \cup \{r\}}|\\
&= |R^a_{A \cup \{r\}} \cup R^b_{B \cup \{r\}}| + |R^a_{A \cup \{r\}} \cap R^b_{B \cup \{r\}}|\\ 
&= |R_{A \cup B \cup \{r\}}| + |R_A \cap R_B|,
\end{split} 
\end{equation}  
where the last transition is due to (\ref{eq:10}) and (\ref{eq:11}). 
We further evaluate the two terms of the last sum in (\ref{eq:13}) as follows. 
The first term
\begin{equation} 
\label{eq:14}
\begin{split} 
|R_{A \cup B \cup \{r\}}| &\geq n - k + |A \cup B \cup \{r\}|\\
&= n - k + |A \cup B| + 1. 
\end{split} 
\end{equation} 
The second term
\begin{equation} 
\label{eq:15}
|R_A \cap R_B| \geq n - k + |A \cap B|,
\end{equation} 
which can be explained below. 
\begin{itemize}
  \item If $A \cap B \neq \varnothing$, then by applying (\ref{eq:1}) to $A \cap B$ we obtain
	\[
	|R_A \cap R_B| \geq |R_{A \cap B}| \geq n - k + |A\cap B|. 
	\]
	\item If $A \cap B = \varnothing$, then $n - k + |A \cap B| = n - k$.
We have
\begin{equation} 
\label{eq:16}
R^a_{A \cup \{r\}} = R^a_A \cup R^a_r = R_A \cup (R_r \setminus \{a\}). 		
\end{equation} 
By (\ref{eq:8}), $R^a_{A \cup \{r\}} = R_A$. Combining this with (\ref{eq:16}) we deduce that
\begin{equation} 
\label{eq:17}
R_r \setminus \{a\} \subseteq R_A. 
\end{equation}  
Similarly, 
\begin{equation} 
\label{eq:18}
R_r \setminus \{b\} \subseteq R_B. 
\end{equation}  
From (\ref{eq:17}) and (\ref{eq:18}) we have
	\[
	|R_A \cap R_B| \geq |R_r \setminus \{a,b\}| \geq (n - k + 2) - 2 = n - k,   
	\]
which proves that (\ref{eq:15}) is correct when $A \cap B = \varnothing$. 
\end{itemize}
Finally, from (\ref{eq:13}), (\ref{eq:14}), and (\ref{eq:15}) we deduce that
\[
\begin{split} 
&\quad \ 2(n-k) + |A| + |B|\\
&\geq \big(n - k + |A \cup B| + 1\big) + \big(n - k + |A \cap B| \big)\\
&= 2(n-k) + |A| + |B| + 1,
\end{split} 
\]
which produces a contradiction. 

The proof of the first part of this theorem also provides a polynomial time
algorithm to find subsets of $R_i$'s that all have cardinality $n -k+1$ yet still
maintain the MDS Condition (\ref{eq:1}). Indeed, we keep removing the
elements of the subsets $R_i$ in the following way. If there exists $r \in [k]$
such that $|R_r| \geq n - k + 2$, then as we just prove, for $a, b \in R_r$, 
it is impossible that removing $a$ or $b$ from $R_r$ both render the MDS Condition
violated. Therefore, we can either remove $a$ or $b$ while still maintaining
the MDS Condition. Note that according to \cite{DauSongDongYuenISIT2013}, 
the MDS Condition can be verified in polynomial time in $k$ and $n$. 
Therefore, this algorithm terminates in polynomial time in $k$ and $n$ 
and produces subsets $R'_i$'s of the original sets $R_i$'s that satisfy the
stated requirement in the theorem.  
\end{proof}  
\vskip 10pt 

\section{A Combinatorial Approach to the GM-MDS Conjecture}
\label{sec:main}

Our main idea is to first simplify the GM-MDS Conjecture, using Theorem~\ref{thm:Hall_g}. 
Then by employing generalized Reed-Solomon (GRS) codes, we reduce our code design
problem over low field sizes to a pure combinatorial set problem. 
Our main findings include 
\begin{itemize}
	\item a theoretical proof of the correctness of the Simplified GM-MDS Conjecture 
	when the sets of $0$-entries of rows of $\bM$ intersect each other (pairwise) at 
at most one element, 
\item a numerical proof of the correctness of the Simplified GM-MDS Conjecture
for all $k \leq 7$. 
\end{itemize}

\subsection{Simplified GM-MDS Conjecture}
Based on Theorem~\ref{thm:Hall_g}, we can simplify the GM-MDS Conjecture
to the case where the row weights of the given matrix $\bM$ are precisely $n-k+1$. 
\begin{figure}[htb]
\centering{
\fbox{
\parbox{3.3in}{
\centerline{\bf Simplified GM-MDS Conjecture}
The statement is the same as in the GM-MDS Conjecture, except that we assume
all rows of $\bM$ have weight precisely $n - k + 1$. 
}
}
}
\caption{The Simplified GM-MDS Conjecture}
\label{fig:simplified_MDS_conjecture}
\end{figure}
The two conjectures are, in fact, equivalent. Indeed, if the GM-MDS Conjecture
is true, then obviously the Simplified GM-MDS Conjecture is also true. 
Conversely, suppose that the Simplified GM-MDS Conjecture is true, we need
to show that the GM-MDS Conjecture is also true. Let $\bM$ be any 
$k \times n$ binary matrix that satisfies the MDS Condition. By applying 
Theorem~\ref{thm:Hall_g} to the supports $R_i$'s of rows of $\bM$, we can
find another $k\times n$ binary matrix $\bM'$ that fits $\bM$, satisfies the MDS Condition, 
and furthermore have row weights precisely $n-k+1$. As we assume that the Simplified GM-MDS
Conjecture is true, as long as $q \geq n + k - 1$ there exists an $[n,k]_q$ MDS code 
with a generator matrix that fits $\bM'$, and hence, also fits $\bM$. 
Thus, the two conjectures are equivalent.  

\subsection{Reduction of the Simplified GM-MDS Conjecture to a Set Problem}

Let $\bM$ be a $k\times n$ binary matrix that satisfies the MDS Condition and 
has row weights $n - k + 1$. We aim to show that there exists an $[n,k]_q$
generalized Reed-Solomon (GRS) code that has a generator matrix $\bG$ fitting $\bM$.
As all rows of $\bM$ have weight $n - k + 1$, in fact, $\bM$ is the support
matrix of $\bG$, i.e. $g_{i,j} = 0$ if and only if $m_{i,j} = 0$. 
Let $\al_1, \al_2, \ldots, \al_n$ be $n$ distinct elements of $\fq$, the evaluation points
in the standard construction of a GRS code. 
Since a codeword of weight $n - k + 1$ in an $[n,k]_q$ MDS code
is uniquely determined (up to scalar multiple) by its support (\cite[Ch. 11]{MW_S}), 
the rows of a generator matrix $\bG$ (with support $\bM$) of the desired GRS code
correspond to the polynomials 
\begin{equation} 
\label{eq:19}
f_i(x) = \prod_{j \in Z_i} (x-\al_j),\quad i \in [k],
\end{equation}
where 
\begin{equation} 
\label{eq:12}
Z_i = [n] \setminus \supp(\bM_i) = \{j\in [n] \mid m_{i,j} = 0\}.
\end{equation}   
In other words, 
\begin{equation} 
\label{eq:G}
\bG = 
\begin{pmatrix}
f_1(\al_1) & f_1(\al_2) & \cdots & f_1(\al_n)\\
f_2(\al_1) & f_2(\al_2) & \cdots & f_2(\al_n)\\
\vdots & \vdots & \cdots & \vdots \\
f_k(\al_1) & f_k(\al_2) & \cdots & f_k(\al_n)\\ 
\end{pmatrix}.
\end{equation} 
For $i \in [k]$ let
\begin{equation} 
\label{eq:20}
f_i(x) = \sum_{j = 1}^k a_{i,j} x^{j-1}.
\end{equation} 
Then we can rewrite $\bG$ as
\[
\begin{split} 
\bG &= \bA \bV\\
&=
\begin{pmatrix}
a_{1,1} & a_{1,2} & \cdots & a_{1,k}\\
a_{2,1} & a_{2,2} & \cdots & a_{2,k}\\
\vdots & \vdots & \cdots & \vdots \\
a_{k,1} & a_{k,2} & \cdots & a_{k,k}
\end{pmatrix}
\begin{pmatrix}
1 & 1 & \cdots & 1\\
\al_1 & \al_2 & \cdots & \al_n\\
\al_1^2 & \al_2^2 & \cdots & \al_n^2\\
\vdots & \vdots & \cdots & \vdots \\
\al_1^{k-1} & \al_2^{k-1} & \cdots & \al_n^{k-1}
\end{pmatrix}.
\end{split} 
\]
Clearly, $\bG$ has full rank and hence is truly a generator matrix of an $[n,k]_q$ GRS
code if and only if $\bA$ is invertible. Let
\[
F(\al_1,\ldots,\al_n) = \det(\bA)\prod_{i > j}(\al_i - \al_j), 
\]
which is a polynomial in $\al_1, \ldots, \al_n$. 

\vskip 10pt 
\begin{lemma}
\label{lem:1}
The highest degree of each variable $\al_t$ in $F(\al_1,\ldots,\al_n)$ is at most $n + k - 2$. 
\end{lemma}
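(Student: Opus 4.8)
The plan is to bound the degree of $\al_t$ in $F = \det(\bA)\prod_{i>j}(\al_i - \al_j)$ by bounding the contributions of the two factors separately and then summing. First I would analyze the Vandermonde-type product $\prod_{i>j}(\al_i-\al_j)$: this product is taken over the $\binom{n}{2}$ pairs, and a fixed variable $\al_t$ appears as a factor in exactly $n-1$ of them (namely those pairs involving index $t$). Since each such factor is linear in $\al_t$, the degree of $\al_t$ in this product is exactly $n-1$.

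Next I would bound the degree of $\al_t$ in $\det(\bA)$. The key observation is that each entry $a_{i,j}$ is a coefficient of the polynomial $f_i(x) = \prod_{j \in Z_i}(x - \al_j)$, and since $|Z_i| = |[n]\setminus\supp(\bM_i)| = n - (n-k+1) = k-1$, each $f_i$ has degree exactly $k-1$, so the matrix $\bA$ is genuinely $k \times k$ as written. The coefficient $a_{i,j}$ is (up to sign) an elementary symmetric polynomial in the roots $\{\al_\ell : \ell \in Z_i\}$, and in each such coefficient any particular $\al_t$ appears to degree at most $1$. Since $\det(\bA)$ is a sum of products of $k$ entries, one from each row, and $\al_t$ can appear in at most one entry per row (specifically in row $i$ only when $t \in Z_i$), the degree of $\al_t$ in any single product term is at most $1$ per row in which $t$ is a root. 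The cleanest bound here is that the total degree of $\det(\bA)$ in $\al_t$ is at most the number of rows $i$ for which $t \in Z_i$, and certainly at most $k-1$ (since $\det(\bA)$ has total degree at most $\sum_i \deg f_i$'s worth of each variable, but per variable the bound is the count of occurrences).

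The main obstacle I expect is getting the sharpest possible per-variable bound on $\det(\bA)$ rather than a crude one. A naive argument gives $\deg_{\al_t}\det(\bA) \le k$, but combined with the product this would yield $n-1+k$, one more than the claimed $n+k-2$. The refinement I would use is that a fixed $\al_t$ can occur in a product expansion of the determinant in at most $k-1$ of the $k$ factors simultaneously: if $\al_t$ appeared in the chosen entry from every row, that would require $t \in Z_i$ for all $i$, i.e. $t \in \cap_{i=1}^k Z_i$; but the MDS Condition applied to $I = [k]$ gives $|\cup_i \supp(\bM_i)| = n$, forcing $\cap_i Z_i = \varnothing$. Hence in every monomial of $\det(\bA)$ at least one of the $k$ factors is free of $\al_t$, so $\deg_{\al_t}\det(\bA) \le k-1$.

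Combining the two bounds, the degree of $\al_t$ in $F$ is at most $(k-1) + (n-1) = n+k-2$, which is exactly the claimed bound. I would present the Vandermonde part as a clean one-line count, then devote the bulk of the argument to the determinant, carefully invoking the MDS Condition at $I=[k]$ to save the crucial $+1$. The only care needed is to confirm that each $f_i$ indeed has degree exactly $k-1$ so that $\bA$ is square and the determinant expansion has precisely $k$ factors per monomial; this follows directly from the assumption that all rows of $\bM$ have weight $n-k+1$.
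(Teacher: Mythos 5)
Your proposal is correct, and the overall decomposition is the same as the paper's: bound $\deg_{\al_t}$ of $\det(\bA)$ and of $\prod_{i>j}(\al_i-\al_j)$ separately and add. The one place where you genuinely diverge is the crucial refinement from $k$ to $k-1$ for $\det(\bA)$. The paper gets it for free from the structure of $\bA$: since each $f_i$ is monic of degree $k-1$, the last column of $\bA$ is identically $1$ (see the case $j=k$ in the formula for $a_{i,j}$), so every Leibniz term $\prod_{i=1}^k a_{i,\sigma(i)}$ contains exactly one constant factor and hence only $k-1$ factors that can involve $\al_t$, each to degree at most $1$. You instead invoke the MDS Condition at $I=[k]$ to conclude $\cap_{i\in[k]}Z_i=\varnothing$, so in every term some row $i$ has $t\notin Z_i$ and contributes a factor free of $\al_t$. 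Both arguments are valid and yield $\deg_{\al_t}\det(\bA)\le k-1$; the paper's version is marginally more general in that it holds for arbitrary $(k-1)$-subsets $Z_i$ without any hypothesis on their intersections, whereas yours leans on the combinatorial assumption (which is, of course, in force throughout this section). Your treatment of the Vandermonde factor (degree exactly $n-1$ in each $\al_t$) matches the paper's.
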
 

\begin{proof}
We have
\begin{equation} 
\label{eq:22}
\det(\bA) = \sum_{\sigma \in \mathscr{S}_k} \sgn(\sigma) \prod_{i=1}^k a_{i,\sigma(i)},
\end{equation}  
where $\mathscr{S}_k$ denotes the symmetric (permutation) group on $k$ elements, and
$\sgn(\sigma)$ denotes the sign of the permutation $\sigma$. 
According to (\ref{eq:19}) and (\ref{eq:20}),
\begin{equation} 
\label{eq:23}
a_{i,j} = 
\begin{cases}
(-1)^{k-j} \sum_{T \subseteq Z_i,\ |T| = k-j} \prod_{t \in T}\al_t,& \text{ if } 1\leq j < k,\\
1,& \text{ if } j = k, 
\end{cases}  
\end{equation} 
where $Z_i$ is given in (\ref{eq:12}). 
Therefore, the highest degree of $\al_t$ in each $a_{i,j}$ is at most one, for 
every $t \in [n]$, $i \in [k]$ and $j \in [k]$. As a result, by (\ref{eq:22})
the highest degree of $\al_t$ in $\det(\bA)$ is at most $k - 1$. 
Obviously the highest degree of $\al_t$ in $\prod_{i > j}(\al_i - \al_j)$ is $n - 1$. 
Thus, the highest degree of $\al_t$ in $F(\al_1, \ldots, \al_n)$ is at most $n + k - 2$.    
\end{proof} 

\begin{lemma}
\label{lem:2}
If the polynomial $F(\al_1,\ldots,\al_n)$ is not identically zero then as long as $q \geq n + k - 1$, 
there exist $\al^*_1, \ldots, \al^*_n$ in $\fq$ such that $\bG$ 
given by (\ref{eq:G}), with $\al_i$ being replaced by $\al^*_i$, 
is a generator matrix of an $[n,k]_q$ GRS code. 
\end{lemma}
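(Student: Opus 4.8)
The plan is to interpret the two conditions that make $\bG$ a GRS generator matrix as the nonvanishing of the single polynomial $F$, and then to locate a point in $\fq^n$ at which $F$ does not vanish by exploiting the per-variable degree bound of Lemma~\ref{lem:1}. Recall from the factorization $\bG = \bA\bV$ that $\bG$ generates an $[n,k]_q$ GRS code precisely when two things hold: the evaluation points $\al^*_1,\ldots,\al^*_n$ are pairwise distinct (so that $\bV$ is the Vandermonde matrix of a genuine GRS code), and $\bA$ is invertible (so that $\bG$ has full rank $k$). The first requirement is equivalent to $\prod_{i>j}(\al^*_i - \al^*_j) \neq 0$, and the second to $\det(\bA) \neq 0$ at the chosen points. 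Since $F = \det(\bA)\prod_{i>j}(\al_i - \al_j)$, both hold simultaneously if and only if $F(\al^*_1,\ldots,\al^*_n) \neq 0$. Thus it suffices to exhibit a point of $\fq^n$ at which $F$ is nonzero.

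To produce such a point I would invoke the elementary multivariate nonvanishing principle: if a polynomial in $\fq[\al_1,\ldots,\al_n]$ has degree at most $d$ in each variable and every variable ranges over a set $S \subseteq \fq$ with $|S| > d$, then it cannot vanish at every point of $S^n$ unless it is identically zero. This is proved by induction on $n$; the base case is the fact that a nonzero univariate polynomial of degree at most $d$ has at most $d < |S|$ roots, and the inductive step writes the polynomial as $\sum_j g_j(\al_1,\ldots,\al_{n-1})\,\al_n^{\,j}$ and argues that vanishing on all of $S^n$ forces each coefficient $g_j$ to vanish on $S^{n-1}$, whence $g_j \equiv 0$ by the inductive hypothesis. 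I would then apply this with $S = \fq$: by Lemma~\ref{lem:1} the degree of each variable $\al_t$ in $F$ is at most $n+k-2$, while the hypothesis $q \geq n+k-1$ gives $|S| = q \geq n+k-1 > n+k-2$. Since $F$ is assumed not identically zero, the principle yields a point $(\al^*_1,\ldots,\al^*_n) \in \fq^n$ with $F(\al^*_1,\ldots,\al^*_n) \neq 0$, and by the equivalence above, substituting these values into (\ref{eq:G}) gives distinct evaluation points together with an invertible $\bA$, so the resulting $\bG$ is the desired GRS generator matrix.

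Within this lemma there is no serious obstacle: the entire difficulty of the program has been deliberately packaged into the hypothesis that $F$ is not identically zero (which is what must later be derived from the MDS Condition on $\bM$), and here one only cashes that in against the field size. The one point that must be handled with care is the exact matching between the degree bound $n+k-2$ from Lemma~\ref{lem:1} and the threshold $q \geq n+k-1$; it is precisely this gap of one that lets the nonvanishing principle apply, and it is what makes $q \geq n+k-1$ the natural target field size rather than something larger.
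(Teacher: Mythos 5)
Your proposal is correct and follows essentially the same route as the paper: both arguments reduce the claim to finding a nonvanishing point of $F$ using the per-variable degree bound $n+k-2$ from Lemma~\ref{lem:1} against the field size $q \geq n+k-1$, and both observe that $F \neq 0$ simultaneously forces the evaluation points to be distinct and $\bA$ to be invertible. The only difference is that the paper cites the multivariate nonvanishing principle from an external reference (Lemma 4 of the cited work of Ho et al.) whereas you prove it inline by induction on the number of variables, which is a standard and correct argument.
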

\begin{proof} 
By Lemma~\ref{lem:1}, the highest degree of each $\al_i$ in $F(\al_1,\ldots,\al_n)$
is at most $n + k - 2$. 
According to~\cite[Lemma 4]{Ho2006}, if $F(\al_1,\ldots,\al_n)$ is not identically 
zero then provided that $q > n + k - 2$, there exist $\al^*_1, \ldots, \al^*_n$ in $\fq$ such that
$F(\al^*_1,\ldots,\al^*_n) \neq 0$. Hence $\al^*_i \neq \al^*_j$ whenever $i \neq j$. 
Moreover, as 
\[
\det(\bA)|_{\al^*_1,\ldots,\al^*_n} \neq 0,
\]
the matrix $\bG = \bA \bV$ also has full rank and hence is a generator matrix of an $[n,k]_q$
GRS code.  
\end{proof} 

\begin{figure}[htb]
\centering{
\fbox{
\parbox{3.3in}{
\centerline{\bf Unique-Multiset Conjecture}
Let $Z_1, Z_2, \ldots, Z_k$ be $(k-1)$-subsets of $[n]$ that satisfy
\[
|\cap_{i \in I} Z_i | \leq k - |I|,
\]  
for every nonempty subset $I \subseteq [k]$. Consider all permutations $\sigma \in \mathscr{S}_k$ of $[k]$
and consider all possible ways to select some $(\sigma(i) - 1)$-subset $S_i$ of the set $Z_i$
for each $i \in [k]$. Take multiset union of these $k$ subsets $S_i$.
Then there exists one of such unions that is unique among all choices of permutations
$\sigma$ and all choices of subsets $S_i$. 
}
}
}
\caption{The Unique-Multiset Conjecture}
\label{fig:unique-multiset_conjecture}
\end{figure}

According to Lemma~\ref{lem:2}, the key point is to show that if $\bM$ satisfies 
the MDS Condition then $\det(\bA)$ is not identically zero. At this moment we 
are not able to prove this statement in general. However, we pose yet another conjecture
on a pure combinatorial problem, referred to as the Unique-Multiset Conjecture. 
We prove in Lemma~\ref{lem:3} that if this conjecture is true, then so is the Simplified GM-MDS Conjecture. The relation among the conjectures raised
in this paper is illustrated in Fig.~\ref{fig:diagram}.   
We show in Theorem~\ref{thm:case1} that the Unique-Multiset Conjecture is in fact true for a nontrivial instance. 
Finally, the numerical result confirms the correctness of this conjecture up to $k = 7$. 

\begin{figure}[htbp]
	\centering
		\includegraphics{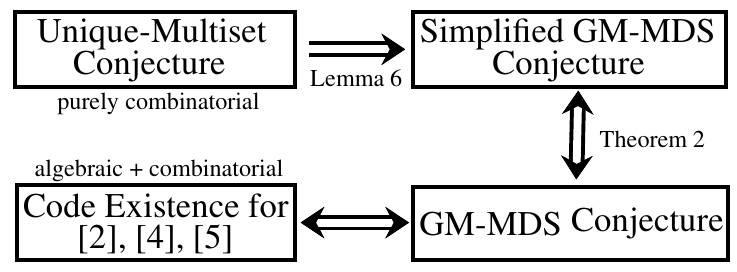}
	\caption{Relation among the conjectures}
	\label{fig:diagram}
\end{figure}

\begin{example}
\label{ex:1}
We consider an example to illustrate the Unique-Multiset Conjecture. Let $k = 3$, $n = 6$. Both Choice 1 and Choice 2 correspond to the identity permutaion. However, the selection of $S_i$ in each choice is different from the other. Choice 3 corresponds to $\sigma(1) = 1, \sigma(2) = 3, \sigma(3) = 2$. 
\begin{figure}[htb]
	\centering
		\begin{tabular}{|l|l|l|l|}
		\hline
		& Choice 1 & Choice 2 & Choice 3\\
		\hline 
			$Z_1 = \{5,6\}$ &$S_1=\{\}$ & $S_1=\{\}$ &  $S_1=\{\}$ \\
			$Z_2 = \{1,4\}$ & $S_2=\{4\}$ & $S_2=\{1\}$ &  $S_2=\{1,4\}$\\
			$Z_3 = \{3,4\}$ & $S_3=\{3,4\}$ &$S_3=\{3,4\}$  & $S_3=\{3\}$\\
			\hline
			multiset union & $\{3,4,4\}$ & $\{1,3,4\}$  & $\{1,3,4\}$\\
		\hline
		\end{tabular}
\end{figure}
In total there are $12 = 3!\times 2$ different choices, but we only list three of them here. 
Choice 1 produces $\{3,4,4\}$, which is unique (easy to verify). By contrast, $\{1,3,4\}$
is not a unique multiset union because it can be obtained by different choices of permutation
and subsets. 
\end{example}

\begin{lemma}
\label{lem:3}
If the Unique-Multiset Conjecture holds then so does the Simplified GM-MDS Conjecture. 
\end{lemma}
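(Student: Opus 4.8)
The plan is to show that the Unique-Multiset Conjecture forces $\det(\bA)$, viewed as a polynomial in $\al_1,\ldots,\al_n$, to be not identically zero; Lemma~\ref{lem:2} then immediately delivers the Simplified GM-MDS Conjecture. First I would reconcile the hypotheses. Since every row of $\bM$ has weight $n-k+1$, each $Z_i = [n]\setminus\supp(\bM_i)$ has cardinality $k-1$. Using $\cup_{i\in I}\supp(\bM_i) = [n]\setminus\cap_{i\in I}Z_i$, the MDS Condition $|\cup_{i\in I}\supp(\bM_i)|\ge n-k+|I|$ is exactly $|\cap_{i\in I}Z_i|\le k-|I|$, which is the hypothesis of the Unique-Multiset Conjecture on the very same sets $Z_i$. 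So the conjecture applies to the $Z_i$ attached to $\bM$.

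Next I would expand the determinant into monomials. By (\ref{eq:22}) and (\ref{eq:23}), treating the $j=k$ case as the empty product, each entry $a_{i,\sigma(i)}$ equals $(-1)^{k-\sigma(i)}$ times the sum over $(k-\sigma(i))$-subsets $T_i\subseteq Z_i$ of $\prod_{t\in T_i}\al_t$. Expanding the product over $i$ in (\ref{eq:22}) yields
\[
\det(\bA) = \sum_{\sigma\in\mathscr{S}_k}\sgn(\sigma)\,(-1)^{\sum_i(k-\sigma(i))}\sum_{\{T_i\}}\prod_{i=1}^k\prod_{t\in T_i}\al_t,
\]
where the inner sum runs over all tuples with $T_i\subseteq Z_i$ and $|T_i|=k-\sigma(i)$. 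The key observation is that $\sum_i(k-\sigma(i)) = k^2 - k(k+1)/2 = k(k-1)/2$ is independent of $\sigma$, so the sign factor is a global constant. Each tuple $\{T_i\}$ contributes the monomial determined by the multiset union $T_1\uplus\cdots\uplus T_k$.

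To line this up with the conjecture's indexing I would substitute $\tau=\rho\circ\sigma$ with $\rho(j)=k+1-j$, so that $|T_i| = k-\sigma(i) = \tau(i)-1$ and $\sgn(\sigma)=\sgn(\rho)\sgn(\tau)$ with $\sgn(\rho)$ a fixed sign. Under this relabelling the $T_i$ have sizes $\tau(i)-1$ and range over exactly the subsets $S_i$ of the conjecture. Collecting equal monomials, the coefficient in $\det(\bA)$ of the monomial attached to a fixed multiset $\mu$ equals, up to one fixed nonzero scalar, $\sum\sgn(\tau)$ over all pairs $(\tau,\{S_i\})$ with $S_i\subseteq Z_i$, $|S_i|=\tau(i)-1$, and $S_1\uplus\cdots\uplus S_k=\mu$. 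The Unique-Multiset Conjecture now supplies a multiset $\mu^*$ realized by exactly one such pair; for $\mu^*$ the signed sum collapses to a single $\pm1$ term, so that coefficient is nonzero and $\det(\bA)\not\equiv0$. Applying Lemma~\ref{lem:2} with $q\ge n+k-1$ then produces the desired GRS generator matrix fitting $\bM$.

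I expect the main obstacle to be the bookkeeping of signs and indices rather than any conceptual depth: one must verify that the global sign factor is genuinely constant and that the reparametrization $\sigma\mapsto k+1-\sigma$ matches the two formulations exactly. A secondary point to check carefully is that \emph{unique} in the conjecture must mean realized by a single pair $(\tau,\{S_i\})$, not merely a single permutation $\tau$, since it is precisely the cancellation among distinct pairs mapping to the same $\mu$ that the uniqueness hypothesis is designed to rule out.
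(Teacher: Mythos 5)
Your proposal is correct and follows essentially the same route as the paper's proof: translate the MDS Condition into the intersection condition on the $Z_i$, identify monomials of $\det(\bA)$ with multiset unions of the $S_i$ via (\ref{eq:22})--(\ref{eq:23}), use the uniqueness hypothesis to exhibit a monomial with nonzero coefficient, and invoke Lemma~\ref{lem:2}. Your treatment is in fact slightly more careful than the paper's, since you make explicit the constant global sign, the reindexing $\sigma \mapsto k+1-\sigma$, and the point that uniqueness must be over pairs $(\tau,\{S_i\})$ so that no cancellation can occur.
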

\begin{proof}
Let $\bM$ be a $k \times n$ binary matrix that has row weights $n - k + 1$ and satisfies
the MDS Condition. Let $Z_i = [n] \setminus \supp(\bM_i)$, then $|Z_i| = k - 1$ $(i \in [k])$. 
Using the De Morgan's law, $\bM$ satisfies the
MDS Condition if and only if 
\[
|\cap_{i \in I} Z_i | \leq k - |I|,
\]  
for every nonempty subset $I \subseteq [k]$. 
According to (\ref{eq:22}) and (\ref{eq:23}), if we treat $\det(\bA)$ as a polynomial, 
which is a sum of monomials in $\al_1, \ldots, \al_n$, then there is a one-to-one
correspondence between the monomials and the multisets described in the Unique-Multiset
Conjecture. More specifically, the monomial $\al_1^{p_1}\al_2^{p_2} \cdots \al_n^{p_n}$
corresponds to the multiset
\[
\big\{\underbrace{11 \cdots 1}_{p_1}\underbrace{22 \cdots 2}_{p_2}\cdots \underbrace{nn \cdots n}_{p_n} \big\}.
\]
Therefore, if the Unique-Multiset Conjecture is correct than there is a monomial in the 
expression of $\det(\bA)$ that appears exactly once. Hence, $\det(\bA)$ is not identically zero.
By Lemma~\ref{lem:2}, as long as $q \geq n + k - 1$, there exists an $[n,k]_q$ GRS code
that has a generator matrix fitting $\bM$. Hence the Simplified GM-MDS Conjecture holds. 
\end{proof}

\subsection{Results on the Unique-Multiset Conjecture}

\begin{theorem}
\label{thm:case1}
The Unique-Multiset Conjecture holds when the sets $Z_i$'s satisfy an additional property 
that $|Z_i \cap Z_{i'}| \leq 1$ for all $i \neq i'$.  
\end{theorem}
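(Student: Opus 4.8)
The plan is to work directly with the expansion of $\det(\bA)$ and exhibit a single monomial whose coefficient is $\pm 1$: by the correspondence set up in the proof of Lemma~\ref{lem:3}, a monomial that survives cancellation is exactly a multiset union produced by a unique choice of $\sigma$ and of subsets $S_i\subseteq Z_i$. To select a candidate I would fix ``weights'' $w_1,\ldots,w_n$ on $[n]$ and attach to each admissible selection $(\sigma,\{S_i\})$ the quantity $W=\sum_i\sum_{t\in S_i}w_t=\sum_t m_t w_t$, where $m_t$ is the multiplicity of $t$. Choosing the $w_t$ generic (algebraically independent over $\mathbb{Q}$) makes the maximizing multiplicity vector $\mathbf m^\ast=(m_t^\ast)$ unique, since $W$ depends only on $(m_t)$. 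Everything then reduces to a purely combinatorial claim: the vector $\mathbf m^\ast$ has a \emph{unique} realization, i.e. no two distinct admissible selections produce the multiset $\mathbf m^\ast$. Via Lemma~\ref{lem:2} this would give $\det(\bA)\neq 0$, but more importantly it establishes the Unique-Multiset statement itself.

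To analyze realizations I would encode a selection as a $0/1$ matrix $X=(x_{i,t})$ supported on the incidence pattern of the $Z_i$ (so $x_{i,t}=1$ requires $t\in Z_i$), whose column sums are the $m_t$ and whose row sums $\sum_t x_{i,t}=\sigma(i)-1$ form a permutation of $\{0,1,\ldots,k-1\}$. If two selections $X,X'$ realize $\mathbf m^\ast$, then $D=X-X'$ has all column sums zero, so its support in the set--element bipartite graph decomposes into alternating cycles and paths. The smallest possible obstruction is an alternating $4$-cycle on two rows $i,i'$ and two columns $t,t'$; unwinding the signs forces $t,t'\in Z_i\cap Z_{i'}$, hence $|Z_i\cap Z_{i'}|\ge 2$, which is \emph{exactly} what the extra hypothesis $|Z_i\cap Z_{i'}|\le 1$ forbids. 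This is the one point where the pairwise condition is used, and it cleanly kills the elementary source of non-uniqueness.

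The crux of the argument is to rule out the remaining alternating structures: cycles of length $\ge 6$ (a ``triangle'' of three sets already realizes a $6$-cycle without violating $|Z_i\cap Z_{i'}|\le 1$) and the alternating \emph{paths} that appear when $X,X'$ use different size-permutations $\sigma$. Here the pairwise bound alone is not enough, so I would combine it with the MDS inequalities $|\cap_{i\in I}Z_i|\le k-|I|$ applied to the vertex set $I=\{i_1,\ldots,i_\ell\}$ of a minimal obstruction, together with the genericity of the weights, to argue two things: first, that any longer alternating cycle must shrink to the already-excluded $4$-cycle (so no cycle can occur), and second, that a size-permutation other than the extremal one strictly decreases $W$ and therefore cannot realize $\mathbf m^\ast$. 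A useful device toward the first point is a threshold argument on the canonical ``top-by-weight'' representative, where the minimal selected weight $\theta_i$ in each row would have to increase strictly around any cycle, giving a contradiction $\theta_{i_1}<\cdots<\theta_{i_1}$. Making this reduction uniform over all cycle lengths $\ell$ and over all ways $\sigma$ can differ—equivalently, choosing the weight order as an elimination order of the linear (pairwise one-intersecting) family $\{Z_i\}$ so that the pinned high-weight elements form a structure admitting no nontrivial swap—is where I expect the real difficulty to lie, and is precisely the step that exploits the special geometry granted by $|Z_i\cap Z_{i'}|\le 1$.
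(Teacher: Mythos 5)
Your proposal is a plan rather than a proof, and the gap sits exactly where you say you ``expect the real difficulty to lie.'' The reduction of non-uniqueness to alternating cycles and paths in the row--element incidence graph is sound, and the observation that an alternating $4$-cycle forces $|Z_i\cap Z_{i'}|\ge 2$ is correct; but that is the only obstruction the hypothesis kills for free. A $6$-cycle needs only three pairwise intersections of size one in a triangle pattern ($t_1\in Z_{i_1}\cap Z_{i_2}$, $t_2\in Z_{i_2}\cap Z_{i_3}$, $t_3\in Z_{i_3}\cap Z_{i_1}$, all distinct), which the hypothesis fully permits --- the paper's own worked example $Z_1=\{1,2,3\}$, $Z_2=\{1,4,5\}$, $Z_3=\{2,4,6\}$, $Z_4=\{3,5,7\}$ contains exactly such a triangle on rows $1,2,3$ with elements $1,4,2$. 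Likewise nothing in your argument controls the alternating paths that arise when the two realizations use different permutations $\sigma$. Your suggested remedies (a threshold or elimination order on generic weights, shrinking long cycles to $4$-cycles) are not carried out, and it is not clear they can be: genericity of the $w_t$ guarantees only that all weight-maximizing selections share the same multiplicity vector, not that this multiset is realized by a single selection, and that second statement is the entire content of the conjecture.

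The paper avoids the extremal-weight route altogether. It constructs a specific multiset greedily: after reordering so that $Z_2\cap Z_3\not\subseteq Z_1$ (possible because $\cap_{i\in[k]}Z_i=\varnothing$), it takes $\sigma$ to be the identity and forces each $S^*_i$ to contain every element of $Z_i\cap Z_{i'}$ for $2\le i'<i$ as well as the element of $Z_i\cap Z_{i+1}$. This pinning makes every downward move and every move from row $i+1$ to row $i$ invalid (the target row already holds the unique shared element), so row $i$ can gain at most $k-i-1$ entries from other rows; since it starts with $i-1$ entries it can never reach $k-1$ entries, whence row $k$ must be frozen, and a downward induction then freezes every row in turn. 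If you want to salvage your approach, the missing ingredient is precisely such a structural choice of the target multiset that simultaneously excludes all long alternating cycles and all permutation-changing paths; the pairwise intersection condition alone does not supply it.
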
 
\begin{proof} 
With the additional property that $|Z_i \cap Z_{i'}| \leq 1$ for all $i \neq i'$, the MDS
Condition now only requires
\begin{equation} 
\label{eq:24}
\cap_{i \in [k]}Z_i = \varnothing. 
\end{equation} 
We first construct a multiset according to the rule in the conjecture, and then prove that
it is unique. Note that if $Z_2 \cap Z_1 = \{j\}$, then there must exist some $i \in [k]\setminus \{1,2\}$
so that $j \notin Z_2 \cap Z_i$, for otherwise (\ref{eq:24}) would be violated. 
Therefore, reordering $Z_i$'s if necessary, we can suppose that $Z_2 \cap Z_3 \not\subseteq Z_1$
if $Z_2 \cap Z_3 \neq \varnothing$.  
Let the permutation $\sigma$ be the identity permutation, i.e. $\sigma(i) = i$ for all $i \in [k]$. 
We construct the $(i-1)$-subsets $S^*_i$ of $Z_i$ $(i \in [k])$ as follows.
\begin{itemize}
	\item \textbf{Step} $1$: $S^*_1 = \varnothing$;
	\item \textbf{Step} $2$: $S^*_2 = \{j'\}$, where $j'$ is the element in $(Z_2 \cap Z_3) \setminus Z_1$, if $Z_2 \cap Z_3 \neq \varnothing$, 
	and $j'$ is an arbitrary element in $Z_2 \setminus Z_1$ if $Z_2 \cap Z_3 = \varnothing$; 
  \item \textbf{Step} $i$: $(3 \leq i < k)$ The $i - 1$ elements of $S^*_i$ are selected as follows. 
	First, for each $i' < i$, $i' \geq 2$, we include in $S^*_i$ the common element (if any) of $Z_i$ and $Z_{i'}$. Note that
	there is at most one such common element for each $2 \leq i' < i$. Hence there are at most $i-2$ such elements.
	Second, we include in $S^*_i$ the common element (if any) of $Z_i$ and $Z_{i+1}$, if that element was not
	included earlier. There is at most one such element.  
	Finally, we fill in $S^*_i$ with other elements of $Z_i$ arbitrarily so that $|S^*_i| = i - 1$. 
	\item \textbf{Step} $k$: $S^*_k = Z_k$. 
\end{itemize}
For example, for $k = 4$, $Z_1 = \{1,2,3\}$, $Z_2 = \{1,4,5\}$, $Z_3 = \{2,4,6\}$, $Z_4 = \{3,5,7\}$, 
the sets $S^*_i$ are selected as follows. In Step 1, $S^*_1 = \varnothing$. In Step 2, as $4 \in Z_2 \cap Z_3 \setminus Z_1$, 
we have $S^*_2 = \{4\}$. In Step 3, we first include in $S^*_3$ the common element $4 \in Z_3 \cap Z_2$. 
Note that $Z_3 \cap Z_4 = \varnothing$. Hence we take an arbitrary element in $Z_3\setminus \{4\}$, for instance $6$, 
and have $S^*_3 = \{4,6\}$. In the last step, $S^*_4 = Z_4 = \{3,5,7\}$. 
\begin{table}[htb]
	\centering
		\begin{tabular}{|l|l|}
		\hline
			$Z_1 = \{1,2,3\}$ & $S^*_1 = \{\}$\\
			\hline
			$Z_2 = \{1,4,5\}$ & $S^*_2 = \{4\}$\\
			\hline
			$Z_3 = \{2,4,6\}$ & $S^*_3 = \{4,6\}$\\
			\hline
			$Z_4 = \{3,5,7\}$ & $S^*_4 = \{3,5,7\}$\\
			\hline
		\end{tabular}
		\caption{An example on how to construct the unique multiset.}
		\label{tab:1}
\end{table}
The resulting multiset is $[4,4,6,3,5,7]$. 

We now prove that the multiset union $\M^*$ of the sets $S^*_1, \ldots, S^*_k$ constructed above is unique 
among all choices of $\sigma \in \mathscr{S}_k$ and all choices of $(\sigma(i)-1)$-subset $S_i$ of $Z_i$ $(i \in [k])$. 
Arrange all the elements of $\M^*$ in a $k$-row array $\A^*$, so that the $i$th row of $\A$
consists of the elements of $S^*_i$ $(i \in [k])$. The multiset $\M^*$ is unique as required
if and only if any $k$-row array $\A$ that satisfies
\begin{itemize}
	\item \textbf{(C1)} precisely one row of $\A$ has $i-1$ entries, for each $i \in [k]$,
	\item \textbf{(C2)} all entries in the $i$th row of $\A$ are elements of $Z_i$, for each $i \in [k]$, 
	\item \textbf{(C3)} the multiset union of the rows of $\A$ is $\M^*$, 
\end{itemize}
would be identical to $\A^*$. 
If such an array exists, it would be obtained from $\A^*$ via a sequence (possibly empty) of \emph{entry-moves}. 
An entry-move, or just a move, for short, is the act of moving an entry from one row to another. 
We only consider \emph{valid} moves in which the entry $j$ in row $i$ is moved to 
row $i'$ under the following rules
\begin{itemize}
	\item \textbf{(R1)} $j \in Z_{i'}$, 
	\item \textbf{(R2)} $j$ is not an entry in the row $i'$ before the move, i.e. the move does not
	create duplicated entries in row $i'$. 
\end{itemize} 
We also assume that there is \emph{no redundant entry-move} in the sequence in the following 
sense. If there is an entry-move that moves $j$ out of row $i$ then there is no other
entry-move in the sequence that moves $j$ from anywhere to row $i$.
Such a sequence is called \emph{irredundant}. 
We now examine all possible valid entry-moves on $\A^*$ and show that if $s$ is
an \emph{irredundant} sequence of valid moves that turns $\A^*$ into another array $\A$
satisfying the aforementioned conditions (C1)-(C3), then $\A \equiv \A^*$ and $s$ is empty. 
Note that as the sequence is irredundant, if $j \in S^*_i$ then in the sequence, there is
no entry-move that moves $j$ from another row to row $i$.  
\\

\nin\textbf{Claim 1:} For each $i \leq k$, $i \geq 2$, there is no valid move from any row $i' < i$
to row $i$. In other words, there is no valid \emph{downward} move in the array $\A^*$. 
\begin{proof}[Proof of Claim 1]
First, as the first row of $\A^*$ has no entries, there is obviously no valid move from the first row to
any other row. 
Note that as the sequence of entry-moves is irredundant, all the entries that are moved to
the first row cannot be moved out of this row. 
According to the way $S^*_i$ is constructed in Step $i$, the set $S^*_i$
contains all common elements (if any) of $Z_i$ and $Z_{i'}$ for $2 \leq i' < i$.
Therefore, according to (R1) and (R2), whether $Z_i \cap Z_{i'} = \varnothing$
or not, there is no valid move from row $i'$ to row $i$.   
\end{proof}
\vskip 10pt

\nin\textbf{Claim 2:} For each $i \leq k - 1$ and $i+1 \leq \ell \leq k$, 
there are at most $\ell - i - 1$ valid moves (\emph{upward}) from the rows $i+1, i+2, \ldots, \ell$ to row $i$. 
\begin{proof}[Proof of Claim 2]
As $|Z_i \cap Z_{i'}| \leq 1$ when $i' \neq i$, and as the sequence of entry-moves 
is irredundant, there is at most one valid move from each of the
rows $i+1, i+2, \ldots, \ell$ to row $i$. 
For $i = 1$, according to Step 2, $S^*_2$ contains no element in $Z_1$. Hence, 
by (R1), there is no valid move from row $2$ to row $1$.  
For $i \geq 2$, according to Step 2 and Step $i$, $S^*_i$ and $S^*_{i+1}$
both contain their unique common element in $Z_i \cap Z_{i+1}$ (if any). Therefore, according to (R1) and (R2), 
in any case whether $Z_i \cap Z_{i+1} = \varnothing$ or not, 
there is no valid move from row $i+1$ to row $i$. Thus, there are at most $\ell-i-1$
valid moves from the rows $i + 1, i + 2, \ldots,\ell$ to row $i$.   
\end{proof} 
\vskip 10pt

We now prove in an induction manner that if an irredundant sequence $s$ of valid moves
turn $\A^*$ in $\A$ that satisfies (C1)-(C3), then it must be an empty sequence, and hence 
$\A \equiv \A^*$. 

According to Claim 1 and Claim 2, in total there are at most $k - i - 1$ valid moves
from other rows to row $i$, for each $i \leq k - 1$. As $|S^*_i| = i - 1$, after applying
the sequence $s$ to $\A^*$, row $i$ has at most 
\[
(i - 1) + (k - i - 1) = k - 2 < k - 1 
\]   
entries. 
Therefore, by (C1), the $k$th rows of $\A$ and $\A^*$ are identical. 
Hence, there is no move in $s$ originating from the row $k$. 

Again, according to Claim 1 and Claim 2, in total there are at most $(k-1) - i - 1$ valid moves
from other rows (row $k$ is excluded as $s$ contains no moves from this row) to row $i$, 
for each $i \leq k - 2$. As $ |S^*_i| = i - 1$, after applying the sequence $s$ to $\A^*$, row $i$ has at most 
\[
(i - 1) + (k - i - 2) = k - 3 < k - 2
\]   
entries. Therefore, by (C1), the $(k-1)$th rows of $\A$ and $\A^*$ are identical. 
Hence, there is no move in $s$ originating from either row $k$ or $k-1$.
Repeating this argument again and again until we establish that the sequence $s$
leaves all rows of $\A^*$ unchanged, we complete the proof of the theorem.  
\end{proof} 
\vskip 10pt

According to Lemma~\ref{lem:3} and Theorem~\ref{thm:case1}, we settle the Simplified 
GM-MDS Conjecture for the case when the matrix $\bM$ satisfies an additional property
that the set of zero coordinates of rows of $\bM$ intersect each other at at most one element. 

On the computational side, we verified that the Unique-Multiset Conjecture holds, and hence
so does the (Simplified) GM-MDS Conjecture, for all $k \leq 7$. We ran a program to test
all legitimate input matrices $\bM$ for all $n \leq k(k-1)$. Note that by examining the 
statement of the Unique-Multiset Conjecture, clearly it is sufficient to verify the conjecture 
for all $n \leq k(k-1)$ only.    

\section{Conclusion}
\label{sec:conclusion}

We unify the recently studied problems on designing MDS codes given certain
constraints on the support of the generator matrices and propose 
a combinatorial approach that reduces the whole problem to an elementary set problem. 
We report some initial progress on this promising approach that provides
further evidences on the solvability of the problem at hand, which we believe
is of interest to the coding theory community. 
If the conjecture in this work is proved to be true, then the existence of 
valid codes for the applications described in~\cite{YanSprintson2013, DauSongDongYuenISIT2013,
HalbawiHoYaoDuursma2013} will be confirmed. 
In such a case, a randomized algorithm to find such good codes, according to
the proposed approach, is already available. 
Designing an efficient algorithm that deterministically find the good codes would be the next
task to consider. 

\section*{Acknowledgment}
The authors thank Yeow Meng Chee and Muriel M\'{e}dard for helpful discussions. 

\bibliographystyle{IEEEtran}
\bibliography{Existence-GRS-Given-GM-Support}

\section*{Appendix}

\subsection{Relation With the Weakly Secure Cooperative Data Exchange Problem}
We show below that the support matrix of the coding matrix used for a
Cooperative Data Exchange (CDE) instance in Yan and 
Sprintson's work~\cite{YanSprintson2013} must satisfy the MDS Condition. 
As a consequent, the correctness of the GM-MDS Conjecture would imply the
existence of an optimal CDE coding scheme over small field size $(q \geq n + k - 1)$
that has maximum degree of secrecy~\cite{YanSprintson2013}. 
Conversely, it is straightforward that our problem stated in the GM-MDS Conjecture
corresponds to a special case of their problem when in an optimal CDE coding scheme, 
each client broadcasts at most one coded packet. 

Let $X = \{x_1,x_2,\ldots, x_n\}$ be the $n$ original packets. Let $H_s \subsetneq [n]$ $(s \in [m])$ be 
the set of indices of original packets in $X$ that Client $C_s$ has access to. 
Let $b_s$ $(s \in [m])$ be the number of coded packets broadcast from $C_s$ according to
an optimal CDE coding scheme. 
Let $k = \sum_{s \in [m]} b_s$, which is the optimal total number of transmissions. 
Then the key problem in~\cite{YanSprintson2013} is to find a $k \times n$ matrix $\bG = (g_{i,j})$
that is a generator matrix of an $[n, k]$ MDS code over small fields. Moreover, it is required
that $\bG$ must fit a binary matrix $\bM = (m_{i,j})$, i.e. $g_{i,j} = 0$ whenever $m_{i,j} = 0$, which is 
defined as follows. The $k$ rows of $\bM$ are divided into $m$ groups, where the first group
consists of the first $b_1$ rows, the second group consists of the next $b_2$ rows, and so forth. 
Moreover, for each $s \in [m]$, all $b_s$ rows of $\bM$ in the corresponding group
have the same support, which is exactly $H_s$. We prove below that such a matrix $\bM$,
in fact, satisfies the MDS Condition stated in the GM-MDS Conjecture (Fig.~\ref{fig:MDS_conjecture}). 
 
According to Courtade and Wesel~\cite{CourtadeWesel-Milcom-2010}, we have
\[
|\cap_{s \in S} \overline{H}_s| \leq \sum_{s \notin S} b_s,
\] 
for every set $S \subseteq [m]$. The inequality is obtained by a typical cut-set argument, which
 requires that for every set of clients,
the coding scheme must provide them with at least as many packets from other clients 
as the number of original 
packets that the clients in the set all want. We can rewrite the inequality above as
\[
|\cup_{s \in S} H_s| \geq  n - \sum_{s \notin S} b_s, 
\]  
which in turn is equivalent to 
\begin{equation} 
\label{eq:25}
|\cup_{s \in S} H_s| \geq  n - k + \sum_{s \in S} b_s. 
\end{equation}  
Let $I$ be an arbitrary nonempty subset of $[k]$. Let $S$ be a subset of $[m]$
where $s \in S$ if and only if $I$ contains at least one row from the $s$th group of rows of $\bM$
(described earlier). Then 
\begin{equation} 
\label{eq:26}
\sum_{s \in S} b_s \geq |I|.
\end{equation} 
Note that $H_s$ is precisely the supports $R_i$ of the rows $i$ in the $s$th group of $b_s$ rows of $\bM$.
Therefore, according to (\ref{eq:25}) and (\ref{eq:26}), 
\[
\begin{split} 
|\cup_{i \in I} R_i| &= |\cup_{s \in S} H_s|\\
&\geq n - k + \sum_{s \in S} b_s\\
&\geq n - k + |I|. 
\end{split} 
\]
Thus, $\bM$ satisfies the MDS Condition. 

\subsection{Relation With the Distributed Reed-Solomon Code Problem}

In a Simple Multiple Access Network (SMAN)~\cite{HalbawiHoYaoDuursma2013}, 
there are $m$ sources $S_1, \ldots, S_m$ with information rates $r_1, \ldots, r_m$. 
These sources transmit all information they have to a single destination via a set of $n$ relays
$V_1, \ldots, V_n$. 
Each source $S_i$ is connected to a subset of relays via $r_i$-capacity links. 
Each relay $V_j$ is connected to the destination via an unit-capacity link. 
There are at most $z$ link/node errors across the network. 
The cut-set bound requires that the total rate of each group of sources
is bounded from above by the min-cut between that group of sources and the destination in the network
less $2z$. 
The key problem is to design an appropriate coding matrix over small fields 
that guarantees the successful retrieval of all information by destination under $z$ errors,
whenever the rate vector satisfies the cut-set bound. 
The code design problem posed by Halbawi {\et}~\cite{HalbawiHoYaoDuursma2013}, 
as we show below, is equivalent to our problem, stated in the GM-MDS Conjecture 
(Fig.~\ref{fig:MDS_conjecture}). 

Let $\bA = (a_{i,j})$ be the $m \times n$ adjacency matrix for the SMAN instance as defined in~\cite{HalbawiHoYaoDuursma2013}. Then $a_{i,j} = 1$ if the source $S_i$ is connected to
the relay $V_j$ and zero otherwise.
The cut-set bound stated in~\cite{HalbawiHoYaoDuursma2013} can be restated as follows
\begin{equation} 
\label{eq:27}
|\cup_{i \in I} \supp(\bA_i) | \geq \sum_{i \in I} r_i + 2z, 
\end{equation}   
for every nonempty subset $I \subseteq [m]$, where $\supp(\bA_i)$ denotes the support of 
the $i$th row of $\bA$.  

Let $r_{\I} \define \sum_{i \in m}r_i$.
Let $\bM' = (m'_{i,j})$ be the $r_{\I}\times n$ binary matrix constructed as follows. 
\begin{itemize}
\item The first $r_1$ rows of $\bM'$ are the same as the first row of $\bA$.
\item The next $r_2$ rows of $\bM'$ are the same as the second row of $\bA$.
\item $\cdots$
\item The last $r_m$ rows of $\bM'$ are the same as the last row of $\bA$. 
\end{itemize}
Then by (\ref{eq:27}), it is easy to verify that
\begin{equation} 
\label{eq:28}
|\cup_{i \in I} \supp(\bM'_i) | \geq |I| + 2z, 
\end{equation} 
for every nonempty subset $I \subseteq [r_{\I}]$. 
Let $k \define n - 2z$. 
The main problem in~\cite{HalbawiHoYaoDuursma2013} is to design
a matrix $\bG'$ that fits $\bM'$ and at the same time generates an $r_{\I}$-dimensional
subspace of an $[n, k]_q$ GRS code, with small $q$ ($q \geq n + 1$ in their 
original paper).   

Setting $I = [m]$ in (\ref{eq:27}), we have
\[
n \geq |\cup_{i \in [m]} \supp(\bA_i)| \geq r_{\I} + 2z, 
\]
which in turn implies that 
\[
k = n - 2z \geq r_{\I}. 
\]
Let $\bM = (m_{i,j})$ be the $k \times n$ binary matrix 
obtained from $\bM'$ by appending to this matrix $k - r_{\I}$ all-one rows.
Then according to (\ref{eq:28}) and the construction of $\bM$, 
\[
|\cup_{i \in I} \supp(\bM_i) | \geq |I| + 2z = n - k + |I|, 
\] 
for every nonempty subset $I \subseteq [k]$. 
Hence $\bM$ satisfies the MDS Condition stated in the GM-MDS Conjecture. 
If the GM-MDS Conjecture holds, then there exists a $k \times n$ matrix
$\bG$ that fits $\bM$ and generates an $[n, k]_q$ GRS code provided that $q \geq n + k - 1$. 
Let $\bG'$ be the submatrix of $\bG$ that consists of the first $r_{\I}$
rows. Then clearly $\bG'$ fits $\bM'$ and generates an $r_{\I}$-dimensional
subspace of an $[n, k]_q$ GRS code, as desired. Thus, if our conjecture holds
then there exists a linear code for SMAN~\cite{HalbawiHoYaoDuursma2013} 
for every rate vector in the capacity region, with the field size as small as $n + k - 1$. 

Conversely, as our problem corresponds to the case where all rates are equal
$r_1 = r_2 = \cdots = r_m = 1$ and $r_{\I} = n - k$, our problem is a special case
of the problem in~\cite{HalbawiHoYaoDuursma2013}. In conclusion, the two
code design problems are essentially equivalent.


\end{document}